\newtheorem{theorem}{Theorem}
\newdefinition{definition}{Definition}
\newdefinition{remark}{Remark}
\newdefinition{comment}{Comment}
\newtheorem{corollary}{Corollary}
\journal{}
\begin{document}
\begin{frontmatter}

%% Title, authors and addresses

%% use the tnoteref command within \title for footnotes;
%% use the tnotetext command for theassociated footnote;
%% use the fnref command within \author or \address for footnotes;
%% use the fntext command for theassociated footnote;
%% use the corref command within \author for corresponding author footnotes;
%% use the cortext command for theassociated footnote;
%% use the ead command for the email address,
%% and the form \ead[url] for the home page:
%% \title{Title\tnoteref{label1}}
%% \tnotetext[label1]{}
%% \fntext[label3]{}

\title{Dual combination combination multi switching synchronization of eight chaotic systems}

%% use optional labels to link authors explicitly to addresses:
%% \author[label1,label2]{}
%% \address[label1]{}
%% \address[label2]{}
%\author{}
%\address{}
\author[JMI]{Ayub Khan}
%akhan12@jmi.ac.in
\author[KMC]{Dinesh Khattar}
%dinesh.khattar31@gmail.com
\author[DOM]{Nitish Prajapati\corref{cor1}} 
\ead{nitishprajapati499@gmail.com}
\cortext[cor1]{Corresponding author}
\address[JMI]{Department of Mathematics, Jamia Millia Islamia, Delhi, India.}
\address[KMC]{Department of Mathematics, Kirorimal College, University of Delhi, Delhi, India.}
\address[DOM]{Department of Mathematics, University of Delhi, Delhi, India.}

\begin{abstract}
In this paper, a novel scheme for synchronizing four drive and four response systems is proposed by the authors. The idea of multi switching and dual combination synchronization is extended to dual combination-combination multi switching synchronization involving eight chaotic systems and is a first of its kind. Due to the multiple combination of chaotic systems and multi switching the resultant dynamic behaviour is so complex that, in communication theory, transmission and security of the resultant signal is more effective. Using Lyapunov stability theory, sufficient conditions are achieved and suitable controllers are designed to realise the desired synchronization. Corresponding theoretical analysis is presented and numerical simulations performed to demonstrate the effectiveness of the proposed scheme. 
\end{abstract}

\begin{keyword}
Chaos synchronization \sep multi switching synchronization \sep combination combination synchronization \sep dual synchronization \sep nonlinear control

%% PACS codes here, in the form: \PACS code \sep code

%% MSC codes here, in the form: \MSC code \sep code
%% or \MSC[2008] code \sep code (2000 is the default)

\end{keyword}

\end{frontmatter}

%% \linenumbers

%% main text
\section{Introduction}
Much has been written and said about the concept of synchronization of chaotic systems since it was first introduced by Pecora and Caroll \cite{1}. Because of its interdisciplinary nature the chaos synchronization problem has received interest from researchers across the academic fields such as physics, mathematics, engineering, biology, chemistry, etc. The potential applications of chaos synchronization to engineering systems, information processing, secure communications, and biomedical science amongst many others has led to a vast variety of research studies in this topic of nonlinear science \cite{2, 3, 4, 5}. Various kinds of synchronization such as complete synchronization, anti synchronization, projective synchronization, reduced order synchronization, etc. have been reported and presented in a variety of chaotic systems using many effective methods such as active control, adaptive control, backstepping control, sliding mode control and so on \cite{6, 7, 8, 9, 10, 11, 12, 13}. 

Amongst many synchronization schemes, dual synchronization of chaotic systems is one which has successfully piqued the scientific curiosity of researchers because of its challenging and non traditional nature. Deviating from the traditional approach of synchronizing one drive and one response system, in dual synchronization two drive systems are synchronized with two response systems. Since the first inception of the idea by Liu and Davids \cite{14}, it has been extensively investigated in various synchronization studies \cite{15, 16, 17}. In all prior work the common theme is to consider one pair of drive system with one pair of response system. Only recently the idea of dual synchronization was extended to two pair of drive systems and one pair of response system \cite{18, 19}.

Synchronization studies involving multiple drive and response systems is a relatively unexplored area of research. New ideas have recently been initiated in the study of chaos synchronization where multiple chaotic systems are involved. In the literature of chaos synchronization the addition of combination synchronization \cite{20, 21}, combination combination synchronization \cite{22, 23}, compound synchronization \cite{24, 25}, double compound synchronization \cite{26}, compound combination synchronization \cite{27, 28} etc. has opened new research directions to be explored. These significant ideas have strengthened the security of information transmission because of the complexity which they bring in transmitted signals.

Multi switching synchronization of chaotic systems is yet another relatively unexplored area of research \cite{29}. In this non conventional scheme, different states of the drive system are synchronized with different state of the response system. Due to this, a wide range of synchronization direction exists for multi switching synchronization schemes. The importance of such kind of studies to information security cannot be emphasised enough and thus makes them a very relevant topic to be investigated. A few reported work in this direction can be studied in \cite{30, 31, 32, 33}. To the knowledge of authors the diverse possibilities of multi switching synchronization have not yet been explored with the dual synchronization schemes.

In this paper, motivated by the above discussion, the authors have combined the idea of multi switching with dual synchronization and extended it to combination combination synchronization of four chaotic systems. The novel scheme, dual combination combination multi switching synchronization involves eight chaotic systems of which four are drive systems and four are response systems. In contrast to double compound synchronization involving four drive and two response systems, this work is a significant improvement and extension. Using Lyapunov stability theory, sufficient conditions have been achieved to realise the desired synchronization. To demonstrate the effectiveness of the proposed method numerical simulations have been performed. The main contribution and advantages of this study are: a) The dual  synchronization study has been extended to pair of two drive and two response systems to achieve dual combination combination synchronization. no such work involving two pair of response systems has earlier been reported. b) The multi switching synchronization scheme is combined with dual combination combination synchronization to achieve the novel scheme which is a first of its kind. No previous work on dual multi switching studies exist. c) The proposed scheme successfully synchronizes eight chaotic systems of which four are drive and four are response systems. The complexity of signals due to multiple combination and the number of synchronization directions due to multi switching vastly enhances the anti attack ability of any signal that will be transmitted using combination of two pair of drive systems. d) Several existing synchronization schemes are obtained as special cases of dual combination combination multi switching synchronization.

%%%%%%%%%%%%%%%%%%%%%%%%%%%%%%%%%%%%%%%%%%%%%%%%%%%%%%%%%%%%%%%%%%%%%%%%%%%%%%%%%%%%%%%%%%%%%%%%%%%%%%%%%%%%%%%%%%%%%%%%%%%%%%%%%%%%%%%%%%%%%%%%%%%%%%%%%%%%%%%%%%%%%%%%%%%%%% 
\section{Formulation of dual combination combination multi switching synchronization}
In this section, we formulate the scheme of DCCMS of chaotic systems. We require two pair of four chaotic drive systems and two pair of four chaotic response systems. Let the first two drive systems be described as 
\begin{equation} \label{1} 
\dot{x_1}=f_1(x_1)
\end{equation}
\begin{equation} \label{2}
\dot{x_2}=f_2(x_2) 
\end{equation} 
where $x_1=(x_{11}, x_{12}, ..., x_{1n})^T$, $x_2=(x_{21}, x_{22}, ..., x_{2n})^T$, $f_1$, $f_2$: $R^n \rightarrow R^n$ are known continuous vector functions. Linear combination of the states of two drive systems (\ref{1}) and (\ref{2}) gives a resultant signal of the form
\begin{equation} \label{3}
\begin{aligned}
S_1=&[a_{11}x_{11}, a_{12}x_{12}, ..., a_{1n}x_{1n}, a_{21}x_{21}, a_{22}x_{22}, ..., a_{2n}x_{2n}]^T \\
      =&\begin{bmatrix}
             A_1 & 0\\
             0 & A_2\\
          \end{bmatrix}  
          \begin{bmatrix}
             x_1\\
             x_2\\
          \end{bmatrix} 
       = Ax   
\end{aligned}                 
\end{equation}
where $A_1=diag(a_{11}, a_{12}, ..., a_{1n})$, and $A_2=diag(a_{21}, a_{22}, ..., a_{2n})$ are two known matrices and $a_{1i}, a_{2j}$ are not all zero at the same time $(i, j = 1, 2, ..., n)$.

Next two drive systems are written as
\begin{equation} \label{4} 
\dot{y_1}=g_1(y_1)
\end{equation}
\begin{equation} \label{5}
\dot{y_2}=g_2(y_2) 
\end{equation} 
where $y_1=(y_{11}, y_{12}, ..., y_{1n})^T$, $y_2=(y_{21}, y_{22}, ..., y_{2n})^T$, $g_1$, $g_2$: $R^n \rightarrow R^n$ are known continuous vector functions. Hence, the linear combination of the states of two drive systems (\ref{4}) and (\ref{5}) gives a resultant signal of the form
\begin{equation} \label{6}
\begin{aligned}
S_2=&[b_{11}y_{11}, b_{12}y_{12}, ..., b_{1n}y_{1n}, b_{21}y_{21}, b_{22}y_{22}, ..., b_{2n}y_{2n}]^T \\
      =&\begin{bmatrix}
             B_1 & 0\\
             0 & B_2\\
          \end{bmatrix}  
          \begin{bmatrix}
             y_1\\
             y_2\\
          \end{bmatrix} 
       = By   
\end{aligned}                 
\end{equation}
where $B_1=diag(b_{11}, b_{12}, ..., b_{1n})$, and $B_2=diag(b_{21}, b_{22}, ..., b_{2n})$ are two known matrices and $b_{1i}, b_{2j}$ are not all zero at the same time $(i, j = 1, 2, ..., n)$.

Let the first two response systems be given by 
\begin{equation} \label{7} 
\dot{z_1}=h_1(z_1)+u_1
\end{equation}
\begin{equation} \label{8}
\dot{z_2}=h_2(z_2)+u_2 
\end{equation} 
where $z_1=(z_{11}, z_{12}, ..., z_{1n})^T$, $z_2=(z_{21}, z_{22}, ..., z_{2n})^T$, $h_1$, $h_2$: $R^n \rightarrow R^n$ are known continuous vector functions, and $u_1=(u_{11}, u_{12}, ..., u_{1n})$, $u_2=(u_{21}, u_{22}, ..., u_{2n})$ are the controllers to be designed. By linear combination of the states of two response systems (\ref{7}) and (\ref{8}) a resultant signal is obtained of the form
\begin{equation} \label{9}
\begin{aligned}
S_3=&[c_{11}z_{11}, c_{12}z_{12}, ..., c_{1n}z_{1n}, c_{21}z_{21}, c_{22}z_{22}, ..., c_{2n}z_{2n}]^T \\
      =&\begin{bmatrix}
             C_1 & 0\\
             0 & C_2\\
          \end{bmatrix}  
          \begin{bmatrix}
             z_1\\
             z_2\\
          \end{bmatrix} 
       = Cz   
\end{aligned}                 
\end{equation}
where $C_1=diag(c_{11}, c_{12}, ..., c_{1n})$, and $C_2=diag(c_{21}, c_{22}, ..., c_{2n})$ are two known matrices and $c_{1i}, c_{2j}$ are not all zero simultaneously $(i, j = 1, 2, ..., n)$.

Let the next two response systems be described as 
\begin{equation} \label{10} 
\dot{w_1}=k_1(w_1)+u_3
\end{equation}
\begin{equation} \label{11}
\dot{w_2}=k_2(w_2)+u_4 
\end{equation} 
where $w_1=(w_{11}, w_{12}, ..., w_{1n})^T$, $w_2=(w_{21}, w_{22}, ..., w_{2n})^T$, $k_1$, $k_2$: $R^n \rightarrow R^n$ are known continuous vector functions, and $u_3=(u_{31}, u_{32}, ..., u_{3n})$, $u_4=(u_{41}, u_{42}, ..., u_{4n})$ are the controllers to be designed. Linear combination of the states of two response systems (\ref{10}) and (\ref{11}) gives a resultant signal of the form
\begin{equation} \label{12}
\begin{aligned}
S_4=&[d_{11}z_{11}, d_{12}z_{12}, ..., d_{1n}z_{1n}, d_{21}z_{21}, d_{22}z_{22}, ..., d_{2n}z_{2n}]^T \\
      =&\begin{bmatrix}
             D_1 & 0\\
             0 & D_2\\
          \end{bmatrix}  
          \begin{bmatrix}
             z_1\\
             z_2\\
          \end{bmatrix} 
       = Dz   
\end{aligned}                 
\end{equation}
where $D_1=diag(d_{11}, d_{12}, ..., d_{1n})$, and $D_2=diag(d_{21}, d_{22}, ..., d_{2n})$ are two known matrices and $d_{1i}, d_{2j}$ are not all zero at the same time $(i, j = 1, 2, ..., n)$.

The error signal for dual combination combination synchronization is 
\begin{equation} \label{13}
\begin{aligned}
e=&S_1+S_2-S_3-S_4 \\
  =& Ax+By-Cz-Dw \\
  =&\begin{bmatrix}
             A_1 & 0\\
             0 & A_2\\
          \end{bmatrix}  
          \begin{bmatrix}
             x_1\\
             x_2\\
          \end{bmatrix}
      + \begin{bmatrix}
             B_1 & 0\\
             0 & B_2\\
          \end{bmatrix}  
          \begin{bmatrix}
             y_1\\
             y_2\\
          \end{bmatrix}
        - \begin{bmatrix}
             C_1 & 0\\
             0 & C_2\\
          \end{bmatrix}  
          \begin{bmatrix}
             z_1\\
             z_2\\
          \end{bmatrix}
        - \begin{bmatrix}
             D_1 & 0\\
             0 & D_2\\
          \end{bmatrix}  
          \begin{bmatrix}
             w_1\\
             w_2\\
          \end{bmatrix}\\
     =&\begin{bmatrix}
          A_1x_1+B_1y_1-C_1z_1-D_1w_1\\
          A_2x_2+B_2y_2-C_2z_2-D_2w_2\\
          \end{bmatrix}
\end{aligned}
\end{equation}
        
\begin{definition}
If there exist four constant diagonal matrices $A, B, C, D \in R^{2n \times 2n}$ and $C \neq 0$ or $D \neq 0$ such that
\begin{equation} \label{14}
\lim_{t \to \infty} \| e \| = \lim_{t \to \infty} \| Ax+By-Cz-Dw \| = 0,
\end{equation} 
where $\|.\|$ is the vector norm, then the drive systems (\ref{1}), (\ref{2}), (\ref{4}), and (\ref{5}) realise dual combination combination synchronization with the response systems (\ref{7}), (\ref{8}), (\ref{10}), and (\ref{11}).
\end{definition}
\begin{remark}
The diagonal matrices $A$, $B$, $C$, and $D$ are called the scaling matrices and can be extended to functional matrices of state variables $x$, $y$, $z$, and $w$. 
\end{remark} 
\begin{comment}
From (\ref{14}) we get that dual combination combination synchronization is achieved when
\begin{equation*}
\lim_{t \to \infty} \| e \| = \lim_{t \to \infty} \| Ax+By-Cz-Dw \| = 0,
\end{equation*} 
which is equivalent to say that
\begin{align*}
\lim_{t \to \infty} \| e_1 \| =& \lim_{t \to \infty} \| A_1x_1+B_1y_1-C_1z_1-D_1w_1 \| = 0\\
\lim_{t \to \infty} \| e_2 \| =& \lim_{t \to \infty} \| A_2x_2+B_2y_2-C_2z_2-D_2w_2 \| = 0
\end{align*}
where $e=(e_1, e_2)^T$. This can be further written as 
\begin{align*}
\lim_{t \to \infty}  e_{1m} =& \lim_{t \to \infty} a_{1m}x_{1m}+b_{1m}y_{1m}-c_{1m}z_{1m}-d_{1m}w_{1m} = 0\\
\lim_{t \to \infty}  e_{2m} =& \lim_{t \to \infty} a_{2m}x_{2m}+b_{2m}y_{2m}-c_{2m}z_{2m}-d_{2m}w_{2m}= 0
\end{align*}
where $e_1=(e_{11}, e_{12}, ..., e_{1n})$, $e_2=(e_{21}, e_{22}, ..., e_{2n})$, and $m= 1, 2, ..., n$.
\end{comment}
\begin{comment}
Let us rewrite the components of $e_1$, and $e_2$ as 
\begin{equation} \label{15}
\left\{
\begin{aligned}
e_{1m_{(ijlm)}}=a_{1i}x_{1i}+b_{1j}y_{1j}-c_{1l}z_{1l}-d_{1m}w_{1m}\\
e_{2m_{(ijlm)}}=a_{2i}x_{2i}+b_{2j}y_{2j}-c_{2l}z_{2l}-d_{2m}w_{2m}\\
\end{aligned}
\right.
\end{equation}
where $i, j, l, m=1, 2, ..., n$ and the subscript $(ijlm)$ denotes $i^{th}$ component of $x_1$ and $x_2$, $j^{th}$ component of $y_1$ and $y_2$, $l^{th}$ component of $z_1$ and $z_2$, and $m^{th}$ component of $w_1$ and $w_2$. In relation to Definition 1, the indices $(ijlm)$ of the error states $e_{1m_{(ijlm)}}$, and $e_{2m_{(ijlm)}}$ are strictly chosen to satisfy $i=j=l=m$ $(i, j, l, m= 1, 2, ..., n)$.
\end{comment}
\begin{definition}
If the indices of the error states $e_{1m_{(ijlm)}}$, and $e_{2m_{(ijlm)}}$ are redefined such that $i=j=l \neq m$ or $i=j=m \neq l$ or $i=l=m \neq j$ or $j=l=m \neq i$; or $i=j \neq l=m$ or $i=l \neq j=m$ or $i=m \neq j=l$; or $i=j \neq l \neq m$ or $i=l \neq j \neq m$ or $i=m \neq l \neq j$ or $i \neq j=l \neq m$ or $i \neq j \neq l=m$ or $i \neq l \neq j=m$; or $i \neq j \neq l \neq m$ and 
\begin{equation} \label{16}
\lim_{t \to \infty}  \| e \| = \lim_{t \to \infty} \| Ax+By-Cz-Dw \| = 0, 
\end{equation}  
where $i, j, l, m = 1, 2, ..., n$ and $\|.\|$ is the vector norm, then the drive systems (\ref{1}), (\ref{2}), (\ref{4}), and (\ref{5}) are said to be in dual combination combination multi switching synchronization with response systems (\ref{7}), (\ref{8}), (\ref{10}), and (\ref{11}).
\end{definition}
\begin{remark}
If $A_1=B_1=C_1=D_1=0$, or $A_2=B_2=C_2=D_2=0$, then dual combination combination multi switching synchronization changes to multi switching combination combination synchronization problem of chaotic systems.
\end{remark}
\begin{remark}
If $C_1=C_2=0$, or $D_1=D_2=0$, then dual combination combination multi switching synchronization changes to dual combination multi switching synchronization of chaotic systems.
\end{remark}
\begin{remark}
If $A_1=B_1=C_1=D_1=0$, and $C_2=0$ or $D_2=0$, or $A_2=B_2=C_2=D_2=0$, and $C_1=0$ or $D_1=0$, then dual combination combination multi switching synchronization changes to multi switching combination synchronization of chaotic systems.
\end{remark}
\begin{remark}
Using suitable values for the scaling factors $A_1$, $A_2$, $B_1$, $B_2$, $C_1$, $C_2$, $D_1$, and $D_2$, multi switching dual projective synchronization and multi switching projective synchronization may also be obtained by the proposed scheme.
\end{remark}
%%%%%%%%%%%%%%%%%%%%%%%%%%%%%%%%%%%%%%%%%%%%%%%%%%%%%%%%%%%%%%%%%%%%%%%%%%%%%%%%%%%%%%%%%%%%%%%%%%%%%%%%%%%%%%%%%%%%%%%%%%%%%%%%%%%%%%%%%%%%%%%%%%%%%%%%%%%%%%%%%%%%%%%%%%%%%%
\section{Synchronization Theory}
To achieve the dual combination combination multi switching synchronization among four chaotic drive systems and four chaotic response systems, let the control functions be defined as
\begin{equation} \label{17}
\left\{
\begin{aligned}
U_{1m}=a_{1i}f_{1i}+b_{1j}g_{1j}-c_{1l}h_{1l}-d_{1m}k_{1m}+e_{1m_{(ijlm)}}, \quad{(i, j, l, m=1, 2, ..., n)}\\
U_{2m}=a_{2i}f_{2i}+b_{2j}g_{2j}-c_{2l}h_{2l}-d_{2m}k_{2m}+e_{2m_{(ijlm)}}, \quad{(i, j, l, m=1, 2, ..., n)}\\
\end{aligned}
\right.
\end{equation}
where 
\begin{equation} \label{18}
\left\{
\begin{aligned}
U_{1m}=c_{1l}u_{1l}+d_{1m}u_{3m}, \quad{(l, m=1, 2, ..., n)}\\
U_{2m}=c_{2l}u_{2l}+d_{2m}u_{4m}, \quad{(l, m=1, 2, ..., n)}\\
\end{aligned}
\right.
\end{equation}
and $f_1=(f_{11}, f_{12}, ..., f_{1n})^T$, $f_2=(f_{21}, f_{22}, ..., f_{2n})^T$, $g_1=(g_{11}, g_{12}, ..., g_{1n})^T$, $g_2=(g_{21}, g_{22}, ..., g_{2n})^T$, $h_1=(h_{11}, h_{12}, ..., h_{1n})^T$, $h_2=(h_{21}, h_{22}, ..., h_{2n})^T$, $k_1=(k_{11}, k_{12}, ..., k_{1n})^T$, and $k_2=(k_{21}, k_{22}, ..., k_{2n})^T$. 
\begin{theorem}
The drive systems (\ref{1}), (\ref{2}), (\ref{4}), and (\ref{5}) achieve dual combination combination multi switching synchronization with response systems (\ref{7}), (\ref{8}), (\ref{10}), and (\ref{11}) if the control functions are chosen as given in (\ref{17}).
\end{theorem}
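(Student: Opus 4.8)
The plan is to reduce the multi-switching error dynamics to a decoupled linear system and then close the argument with a standard quadratic Lyapunov function. First I would differentiate each error state $e_{1m_{(ijlm)}}$ and $e_{2m_{(ijlm)}}$ along the trajectories of the drive and response systems. Using $\dot{x}_{1i}=f_{1i}$, $\dot{y}_{1j}=g_{1j}$, $\dot{z}_{1l}=h_{1l}+u_{1l}$ and $\dot{w}_{1m}=k_{1m}+u_{3m}$ (with the analogous identities for the second components), the derivative of the first error state becomes
\begin{equation*}
\dot{e}_{1m_{(ijlm)}}=a_{1i}f_{1i}+b_{1j}g_{1j}-c_{1l}h_{1l}-d_{1m}k_{1m}-\bigl(c_{1l}u_{1l}+d_{1m}u_{3m}\bigr),
\end{equation*}
and similarly for $\dot{e}_{2m_{(ijlm)}}$. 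The key structural observation is that the four physical controllers $u_1,u_2,u_3,u_4$ enter only through the aggregated combinations $c_{1l}u_{1l}+d_{1m}u_{3m}$ and $c_{2l}u_{2l}+d_{2m}u_{4m}$, which are exactly the quantities $U_{1m}$ and $U_{2m}$ introduced in (\ref{18}); this is precisely what gives the scheme its design freedom, since only the combined control must be prescribed.

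Second, I would substitute the control law (\ref{17}) for $U_{1m}$ and $U_{2m}$. By construction the entire drive- and response-field contribution $a_{1i}f_{1i}+b_{1j}g_{1j}-c_{1l}h_{1l}-d_{1m}k_{1m}$ cancels exactly, leaving the decoupled linear error dynamics
\begin{equation*}
\dot{e}_{1m_{(ijlm)}}=-e_{1m_{(ijlm)}},\qquad \dot{e}_{2m_{(ijlm)}}=-e_{2m_{(ijlm)}},
\end{equation*}
for each $m=1,2,\dots,n$. The crucial point is that this cancellation is insensitive to the particular switching assignment of the indices $(i,j,l,m)$ allowed by Definition 2: because the controller (\ref{17}) is written with the same index pattern as the error state it stabilises, each of the switching configurations listed in Definition 2 collapses to the identical $\dot{e}=-e$ form.

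Finally, I would invoke Lyapunov stability theory. Taking the positive-definite, radially unbounded candidate
\begin{equation*}
V=\tfrac{1}{2}\sum_{m=1}^{n}\bigl(e_{1m_{(ijlm)}}^{2}+e_{2m_{(ijlm)}}^{2}\bigr)=\tfrac{1}{2}\|e\|^{2},
\end{equation*}
its derivative along the reduced dynamics is $\dot{V}=-\sum_{m}(e_{1m_{(ijlm)}}^{2}+e_{2m_{(ijlm)}}^{2})=-2V$, which is negative definite for $e\neq 0$. Hence the origin of the error system is globally asymptotically stable, so $\lim_{t\to\infty}\|e\|=0$, which is precisely condition (\ref{16}) defining dual combination combination multi switching synchronization.

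As for difficulty, there is no genuine analytic obstacle: the argument is a routine feedback cancellation followed by a one-line Lyapunov estimate. The only real care required is organisational, namely keeping the multi-switching index bookkeeping straight when differentiating the error and verifying that the substitution (\ref{18}) correctly absorbs the four separate response controllers into the two aggregated controls $U_{1m},U_{2m}$ before (\ref{17}) is applied. I expect that step, confirming the cancellation holds uniformly across every admissible index switching, to be the one most worth spelling out explicitly.
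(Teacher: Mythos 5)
Your proposal is correct and follows essentially the same route as the paper: the same componentwise error dynamics, the same substitution of the aggregated controls $U_{1m}, U_{2m}$ from (\ref{18}) and the control law (\ref{17}), and the same quadratic Lyapunov function $V=\tfrac{1}{2}e^Te$. The only cosmetic difference is that you cancel the vector-field terms to obtain $\dot{e}=-e$ before invoking Lyapunov, whereas the paper performs that cancellation inside the computation of $\dot{V}$; the argument is otherwise identical.
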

\begin{proof}
Using (\ref{13}) the error dynamical system can be written as 
\begin{equation} \label{19}
\dot{e}=\begin{bmatrix}
             \dot{e}_1\\
             \dot{e}_2\\
          \end{bmatrix}  
       = \begin{bmatrix}
          A_1\dot{x}_1+B_1\dot{y}_1-C_1\dot{z}_1-D_1\dot{w}_1\\
          A_2\dot{x}_2+B_2\dot{y}_2-C_2\dot{z}_2-D_2\dot{w}_2\\
          \end{bmatrix}
\end{equation}
which can be further written as
\begin{equation} \label{20}
\begin{bmatrix}
             \dot{e}_1\\
             \dot{e}_2\\
          \end{bmatrix}  
       = \begin{bmatrix}
          A_1f_1+B_1g_1-C_1(h_1+u_1)-D_1(k_1+u_3)\\
          A_2f_2+B_2g_2-C_2(h_2+u_2)-D_2(k_2+u_4)\\
          \end{bmatrix}
\end{equation}
From this we obtain
\begin{equation} \label{21}
\left\{
\begin{aligned}
\dot{e}_{1m_{(ijlm)}}=&a_{1i}f_{1i}+b_{1j}g_{1j}-c_{1l}(h_{1l}+u_{1l})-d_{1m}(k_{1m}+u_{3m}),\\
 &\quad{(i, j, l, m=1, 2, ..., n)}\\
\dot{e}_{2m_{(ijlm)}}=&a_{2i}f_{2i}+b_{2j}g_{2j}-c_{2l}(h_{2l}+u_{2l})-d_{2m}(k_{2m}+u_{4m}),\\ 
&\quad{(i, j, l, m=1, 2, ..., n)}\\
\end{aligned}
\right.
\end{equation}
where the indices (ijlm) satisfies one of the generic conditions given in Definition 2.

\noindent Let the Lyapunov function be defined as
\begin{align*}
V=&\frac{1}{2}e^Te\\
   =&\frac{1}{2} \sum_{m=1}^{n} (e_{1m_{(ijlm)}})^2 + \frac{1}{2} \sum_{m=1}^{n} (e_{2m_{(ijlm)}})^2 
\end{align*}
The derivative $\dot{V}$ is obtained as
\begin{equation} 
\dot{V}=\sum_{m=1}^{n} e_{1m_{(ijlm)}}\dot{e}_{1m_{(ijlm)}}+\sum_{m=2}^{n} e_{2m_{(ijlm)}}\dot{e}_{2m_{(ijlm)}}
\end{equation}
Using (\ref{17}) and (\ref{21}) in the above equation we get
\begin{align*}
\dot{V}=&\sum_{m=1}^{n} e_{1m_{(ijlm)}}[a_{1i}f_{1i}+b_{1j}g_{1j}-c_{1l}h_{1l}-d_{1m}k_{1m}-U_{1m}]\\
            &+\sum_{m=1}^{n} e_{2m_{(ijlm)}}[a_{2i}f_{2i}+b_{2j}g_{2j}-c_{2l}h_{2l}-d_{2m}k_{2m}-U_{2m}]\\
            =&\sum_{m=1}^{n} e_{1m_{(ijlm)}}(-e_{1m_{(ijlm)}})+\sum_{m=1}^{n} e_{2m_{(ijlm)}}(-e_{2m_{(ijlm)}}) \quad \text (Using (\ref{17}))\\
            =&-e^Te
\end{align*}
Thus we see that $\dot{V}$ is negative definite. Using Lyapunov stability theory, we get $\lim_{t \to \infty} \| e \|=0$, which gives us $\lim_{t \to \infty} \| e_1 \|=0$ and $\lim_{t \to \infty} \| e_2 \|=0$. This means that the drive systems (\ref{1}), (\ref{2}), (\ref{4}), and (\ref{5}) achieve dual combination combination multi switching synchronization with response systems (\ref{7}), (\ref{8}), (\ref{10}), and (\ref{11}). 
\end{proof}

The following corollaries are easily obtained from Theorem 1 and their proofs are omitted here.
\begin{corollary}
(i) If $a_{1i}=b_{1j}=c_{1l}=d_{1m}=0$, $i, j, l, m= 1, 2, ..., n$ then the drive systems (\ref{2}) and (\ref{5}) achieve multi switching combination combination synchronization with the response systems (\ref{8}) and (\ref{11}) provided the control function is chosen as
\begin{equation*}
U_{2m}=a_{2i}f_{2i}+b_{2j}g_{2j}-c_{2l}h_{2l}-d_{2m}k_{2m}+e_{2m_{(ijlm)}}, \quad{(i, j, l, m=1, 2, ..., n)}\\
\end{equation*}
(ii)  If $a_{2i}=b_{2j}=c_{2l}=d_{2m}=0$, $i, j, l, m= 1, 2, ..., n$ then the drive systems (\ref{1}) and (\ref{4}) achieve multi switching combination combination synchronization with the response systems (\ref{7}) and (\ref{10}) provided the control function is chosen as
\begin{equation*}
U_{2m}=a_{2i}f_{2i}+b_{2j}g_{2j}-c_{2l}h_{2l}-d_{2m}k_{2m}+e_{2m_{(ijlm)}}, \quad{(i, j, l, m=1, 2, ..., n)}\\
\end{equation*}
\end{corollary}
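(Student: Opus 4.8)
The plan is to obtain both parts of the corollary directly from Theorem~1 by exploiting the block structure of the error system (13). The decisive observation is that the two error blocks are completely decoupled: $e_1$ involves only the scaling matrices $A_1,B_1,C_1,D_1$ and the states $x_1,y_1,z_1,w_1$, while $e_2$ involves only $A_2,B_2,C_2,D_2$ and $x_2,y_2,z_2,w_2$. Switching off one block of scaling matrices therefore annihilates the corresponding error block identically, and the Lyapunov argument of Theorem~1 transfers verbatim to the surviving block. No new dynamical analysis is needed; the corollary is a block specialization.

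For part (i) I would first impose $a_{1i}=b_{1j}=c_{1l}=d_{1m}=0$ for all $i,j,l,m$, i.e. $A_1=B_1=C_1=D_1=0$. Then each component in (15) satisfies $e_{1m_{(ijlm)}}=a_{1i}x_{1i}+b_{1j}y_{1j}-c_{1l}z_{1l}-d_{1m}w_{1m}\equiv 0$, so the full error collapses to $e=(0,e_2)^T$ and $\|e\|=\|e_2\|$. Hence the synchronization criterion (16) reduces to $\lim_{t\to\infty}\|e_2\|=0$, which is exactly multi switching combination combination synchronization of the drive pair (2),(5) with the response pair (8),(11), as recorded in the remarks following Definition~2. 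Since the surviving dynamics $\dot e_{2m_{(ijlm)}}$ and the stated controller $U_{2m}$ are precisely the second block appearing in the proof of Theorem~1, taking $V=\tfrac12 e_2^Te_2=\tfrac12\sum_{m=1}^n (e_{2m_{(ijlm)}})^2$ and substituting $U_{2m}$ gives $\dot e_{2m_{(ijlm)}}=-e_{2m_{(ijlm)}}$, so $\dot V=-e_2^Te_2<0$. Negative definiteness of $\dot V$ together with Lyapunov stability theory then yields $\|e_2\|\to 0$, which proves part (i).

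Part (ii) is the mirror image: setting $A_2=B_2=C_2=D_2=0$ forces $e_2\equiv 0$, so $\|e\|=\|e_1\|$ and (16) reduces to $\lim_{t\to\infty}\|e_1\|=0$, i.e. multi switching combination combination synchronization of the drive pair (1),(4) with the response pair (7),(10). The surviving controller is now the first block $U_{1m}=a_{1i}f_{1i}+b_{1j}g_{1j}-c_{1l}h_{1l}-d_{1m}k_{1m}+e_{1m_{(ijlm)}}$; the statement reuses the symbol $U_{2m}$, which appears to be a typographical slip for $U_{1m}$. The identical computation with $V=\tfrac12 e_1^Te_1$ delivers $\dot V=-e_1^Te_1<0$ and hence $\|e_1\|\to 0$.

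I do not expect any substantive obstacle, since the result is an immediate block specialization of Theorem~1; the only points needing care are bookkeeping. First, for the surviving block to represent a genuine combination combination scheme (rather than a degenerate one) the remaining response scaling matrices must not both vanish, i.e. $C_2\neq 0$ or $D_2\neq 0$ in part (i), and $C_1\neq 0$ or $D_1\neq 0$ in part (ii); this is inherited from the nondegeneracy requirement $C\neq 0$ or $D\neq 0$ of Definition~1. Second, the controller index in the stated form of part (ii) should be reconciled with the surviving first block, $U_{1m}$, as noted above.
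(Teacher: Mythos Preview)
Your proposal is correct and matches the paper's intended approach: the paper explicitly states that the corollaries ``are easily obtained from Theorem~1 and their proofs are omitted here,'' and your block-specialization argument is exactly that easy derivation. Your observation that the controller in part~(ii) should read $U_{1m}$ rather than $U_{2m}$ is also well taken; this is a typographical slip in the paper.
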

\begin{corollary}
(i) If $c_{1l}=c_{2l}=0$, $l= 1, 2, ..., n$ then the drive systems (\ref{1}), (\ref{2}), (\ref{4}) and (\ref{5}) achieve dual combination multi switching synchronization with the response systems (\ref{10}) and (\ref{11}) provided the control functions are chosen as
\begin{align*}
u_{3m}=&d_{1m}^{-1}a_{1i}f_{1i}+d_{1m}^{-1}b_{1j}g_{1j}-k_{1m}+d_{1m}^{-1}e_{1m_{(ijlm)}}, \quad{(i, j, l, m=1, 2, ..., n)}\\
u_{4m}=&d_{2m}^{-1}a_{2i}f_{2i}+d_{2m}^{-1}b_{2j}g_{2j}-k_{2m}+d_{2m}^{-1}e_{2m_{(ijlm)}}, \quad{(i, j, l, m=1, 2, ..., n)}\\
\end{align*}
(ii) If $d_{1m}=d_{2m}=0$, $m= 1, 2, ..., n$ then the drive systems (\ref{1}), (\ref{2}), (\ref{4}) and (\ref{5}) achieve dual combination multi switching synchronization with the response systems (\ref{7}) and (\ref{8}) provided the control functions are chosen as
\begin{align*}
u_{1l}=&c_{1l}^{-1}a_{1i}f_{1i}+c_{1l}^{-1}b_{1j}g_{1j}-h_{1l}+c_{1l}^{-1}e_{1m_{(ijlm)}},\quad{(i, j, l, m=1, 2, ..., n)}\\
u_{2l}=&c_{2l}^{-1}a_{2i}f_{2i}+c_{2l}^{-1}b_{2j}g_{2j}-h_{2l}+c_{2l}^{-1}e_{2m_{(ijlm)}}, \quad{(i, j, l, m=1, 2, ..., n)}\\
\end{align*} 
\end{corollary}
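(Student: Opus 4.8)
The plan is to derive both parts of Corollary~2 as immediate specializations of Theorem~1, reusing its Lyapunov argument unchanged and performing only the small amount of algebra needed to recover the reduced controllers from the aggregated relation (\ref{18}). Nothing new has to be proved about stability; the work is entirely in substituting the hypothesized zero scaling factors and re-solving for the control variables that survive.

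For part~(i) I would set $c_{1l}=c_{2l}=0$ for every $l$ throughout the construction of Section~2. This annihilates the $S_3$ term, so the error (\ref{13}) collapses to $e=Ax+By-Dw$, i.e.\ to a dual combination multi switching configuration in which only the response pair (\ref{10})--(\ref{11}) is active. With $c_{1l}=0$ the component relation (\ref{18}) reduces to $U_{1m}=d_{1m}u_{3m}$, and likewise $U_{2m}=d_{2m}u_{4m}$. The next step is purely algebraic: take the Theorem~1 value of $U_{1m}$ from (\ref{17}) (its $c_{1l}h_{1l}$ term now absent) and solve $d_{1m}u_{3m}=U_{1m}$ for $u_{3m}$; this yields exactly $u_{3m}=d_{1m}^{-1}a_{1i}f_{1i}+d_{1m}^{-1}b_{1j}g_{1j}-k_{1m}+d_{1m}^{-1}e_{1m_{(ijlm)}}$, and the parallel computation gives $u_{4m}$. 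The index bookkeeping $(ijlm)$ is inherited without change from Definition~2. With these controllers the reduced error dynamics again satisfy $\dot{e}_{1m_{(ijlm)}}=-e_{1m_{(ijlm)}}$ and $\dot{e}_{2m_{(ijlm)}}=-e_{2m_{(ijlm)}}$, so the same Lyapunov function $V=\tfrac{1}{2}e^{T}e$ gives $\dot V=-e^{T}e<0$ and hence $\lim_{t\to\infty}\|e\|=0$.

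Part~(ii) is entirely symmetric. Setting $d_{1m}=d_{2m}=0$ removes the $S_4$ term, collapses (\ref{18}) to $U_{1m}=c_{1l}u_{1l}$ and $U_{2m}=c_{2l}u_{2l}$, and solving for $u_{1l},u_{2l}$ reproduces the listed controllers; the Lyapunov step is word for word the same as in part~(i).

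The one point that genuinely needs care---and the reason the corollary is more than a verbal restatement of Theorem~1---is the invertibility required to pass from (\ref{18}) to the reduced controls. The formula for $u_{3m}$ contains $d_{1m}^{-1}$, so I would have to assume every diagonal entry $d_{1m}$, and $d_{2m}$ for $u_{4m}$, is nonzero; part~(ii) analogously needs $c_{1l},c_{2l}\neq0$ for all $l$. This componentwise nonvanishing is slightly stronger than the matrix condition ``$C\neq0$ or $D\neq0$'' of Definition~1, which only rules out the surviving scaling matrix being identically zero. I would therefore state the nonvanishing hypothesis explicitly before writing the controllers, since without it the division is undefined and the error cannot be driven to zero.
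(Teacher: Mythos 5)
Your proposal is correct and follows exactly the route the paper intends: the paper omits the proof, stating that the corollaries are ``easily obtained from Theorem 1,'' and your specialization of (\ref{17})--(\ref{18}) with the zeroed scaling factors, followed by the unchanged Lyapunov argument, is precisely that derivation. Your added observation that the formulas implicitly require every $d_{1m},d_{2m}$ (respectively $c_{1l},c_{2l}$) to be nonzero---a condition strictly stronger than Definition~1's ``$C\neq0$ or $D\neq0$''---is a genuine and worthwhile sharpening of a hypothesis the paper leaves tacit.
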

\begin{corollary}
(i) If $a_{1i}=b_{1j}=c_{1l}=d_{1m}=0$, and $c_{2m}=0$, $i, j, l, m= 1, 2, ..., n$ then the drive systems (\ref{2}) and (\ref{5}) achieve multi switching combination synchronization with the response system (\ref{11}) provided the control function is chosen as
\begin{equation*}
u_{4m}=d_{2m}^{-1}a_{2i}f_{2i}+d_{2m}^{-1}b_{2j}g_{2j}-k_{2m}+d_{2m}^{-1}e_{2m_{(ijlm)}}, \quad{(i, j, l, m=1, 2, ..., n)}\\
\end{equation*}
(ii) If $a_{1i}=b_{1j}=c_{1l}=d_{1m}=0$, and $d_{2m}=0$, $i, j, l, m= 1, 2, ..., n$ then the drive systems (\ref{2}) and (\ref{5}) achieve multi switching combination synchronization with the response system (\ref{8}) provided the control function is chosen as
\begin{equation*}
u_{2l}=c_{2l}^{-1}a_{2i}f_{2i}+c_{2l}^{-1}b_{2j}g_{2j}-h_{2l}+c_{2l}^{-1}e_{2m_{(ijlm)}}, \quad{(i, j, l, m=1, 2, ..., n)}\\
\end{equation*}
(iii) If $a_{2i}=b_{2j}=c_{2l}=d_{2m}=0$, and $c_{1m}=0$, $i, j, l, m= 1, 2, ..., n$ then the drive systems (\ref{1}) and (\ref{4}) achieve multi switching combination synchronization with the response system (\ref{10}) provided the control function is chosen as
\begin{equation*}
u_{3m}=d_{1m}^{-1}a_{1i}f_{1i}+d_{1m}^{-1}b_{1j}g_{1j}-k_{1m}+d_{1m}^{-1}e_{1m_{(ijlm)}}, \quad{(i, j, l, m=1, 2, ..., n)}\\
\end{equation*}
(iv) If $a_{2i}=b_{2j}=c_{2l}=d_{2m}=0$, and $d_{1m}=0$, $i, j, l, m= 1, 2, ..., n$ then the drive systems (\ref{1}) and (\ref{4}) achieve multi switching combination synchronization with the response system (\ref{7}) provided the control function is chosen as
\begin{equation*}
u_{1l}=c_{1l}^{-1}a_{1i}f_{1i}+c_{1l}^{-1}b_{1j}g_{1j}-h_{1l}+c_{1l}^{-1}e_{1m_{(ijlm)}}, \quad{(i, j, l, m=1, 2, ..., n)}\\
\end{equation*}
\end{corollary}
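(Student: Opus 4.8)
The plan is to obtain each of the four cases as a direct specialization of Theorem~1, exploiting the symmetry between the two index blocks $1,2$ and between the two response channels governed by $C$ and $D$. I would prove part~(i) in full and then deduce (ii)--(iv) by relabelling. The guiding observation is that killing an entire block of scaling factors collapses one of the two error equations identically, so that ``combination combination'' synchronization (two drives, two responses) degenerates into ``combination'' synchronization (two drives, one response), while the multi switching index relations of Definition~2 survive unchanged among the remaining indices.

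For part~(i) I first impose $a_{1i}=b_{1j}=c_{1l}=d_{1m}=0$ together with $c_{2l}=0$ (written $c_{2m}=0$) in the componentwise error (\ref{15}). Two things happen simultaneously: the first block vanishes, since $e_{1m_{(ijlm)}}=a_{1i}x_{1i}+b_{1j}y_{1j}-c_{1l}z_{1l}-d_{1m}w_{1m}\equiv 0$, so subsystem~1 and the responses (\ref{7}),(\ref{10}) drop out entirely; and the surviving block reduces to $e_{2m_{(ijlm)}}=a_{2i}x_{2i}+b_{2j}y_{2j}-d_{2m}w_{2m}$, which is exactly a multi switching combination synchronization error for the drives (\ref{2}),(\ref{5}) and the single response (\ref{11}). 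The key step is to track how the aggregate controller of (\ref{18}) degenerates: with $c_{2l}=0$ one has $U_{2m}=d_{2m}u_{4m}$, so only $u_{4m}$ survives, while the prescription (\ref{17}) becomes $U_{2m}=a_{2i}f_{2i}+b_{2j}g_{2j}-d_{2m}k_{2m}+e_{2m_{(ijlm)}}$. Equating the two and dividing through by $d_{2m}$ isolates $u_{4m}$ in precisely the stated form.

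To close, I substitute $c_{2l}=0$ into the error dynamics (\ref{21}) to obtain $\dot{e}_{2m_{(ijlm)}}=a_{2i}f_{2i}+b_{2j}g_{2j}-d_{2m}(k_{2m}+u_{4m})$, and insert the controller just found; a one line cancellation gives $\dot{e}_{2m_{(ijlm)}}=-e_{2m_{(ijlm)}}$. With the reduced Lyapunov function $V=\tfrac{1}{2}\sum_{m}(e_{2m_{(ijlm)}})^2$ this yields $\dot V=-\sum_{m}(e_{2m_{(ijlm)}})^2<0$, so Lyapunov stability forces $\|e_2\|\to 0$, establishing the asserted synchronization with response (\ref{11}). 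Parts (ii)--(iv) follow by the same three steps under an obvious dictionary: (ii) replaces $c_{2l}=0$ by $d_{2m}=0$, so that $U_{2m}=c_{2l}u_{2l}$ and one solves (\ref{18}) for $u_{2l}$ using $c_{2l}^{-1}$; (iii) and (iv) are the mirror images under the swap $1\leftrightarrow 2$, in which the second block collapses and the first block carries the surviving combination synchronization, yielding the controllers $u_{3m}$ (dividing by $d_{1m}$) and $u_{1l}$ (dividing by $c_{1l}$).

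The only genuine subtlety, and the point I expect to need a word of care, is the invertibility of the surviving scaling factor---$d_{2m}$ in (i), $c_{2l}$ in (ii), and their subsystem~1 counterparts in (iii),(iv)---since isolating a single controller from the aggregate relation (\ref{18}) is impossible without dividing by that factor. This hypothesis is already implicit in the corollary statements through the symbols $d_{2m}^{-1}$ and $c_{2l}^{-1}$, and under it everything else is bookkeeping: confirming that exactly one error block vanishes identically and that the generic index conditions of Definition~2 descend intact to the surviving three system combination.
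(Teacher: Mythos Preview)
Your proposal is correct and follows exactly the route the paper intends: the authors state that the corollary is ``easily obtained from Theorem~1'' and omit the proof, and your argument is precisely this specialization---setting the indicated scaling factors to zero so that one error block vanishes identically, then reading off the surviving controller from (\ref{17})--(\ref{18}) and closing with the same Lyapunov computation. Your added remark on the implicit invertibility of the surviving scaling factor is a fair point of care but does not depart from the paper's approach.
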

%%%%%%%%%%%%%%%%%%%%%%%%%%%%%%%%%%%%%%%%%%%%%%%%%%%%%%%%%%%%%%%%%%%%%%%%%%%%%%%%%%%%%%%%%%%%%%%%%%%%%%%%%%%%%%%%%%%%%%%%%%%%%%%%%%%%%%%%%%%%%%%%%%%%%%%%%%%%%%%%%%%%%%%%%%%%%%
\section{Illustration of the synchronization scheme}
In this section we realize the dual combination combination multi switching synchronization among eight chaotic systems and perform numerical simulations  to show the validity and effectiveness of the proposed scheme. As an example we consider Genesio Tesi system and Lu system to demonstrate the method. let the first two drive systems be given as
\begin{equation} \label{23}
\left\{
\begin{aligned}
\dot{x}_{11}=&x_{12}\\
\dot{x}_{12}=&x_{13}\\
\dot{x}_{13}=&-6x_{11}-2.92x_{12}-1.2x_{13}+x_{11}^2
\end{aligned}
\right.
\end{equation}

\begin{equation} \label{24}
\left\{
\begin{aligned}
\dot{x}_{21}=&36(x_{22}-x_{21})\\
\dot{x}_{22}=&-x_{21}x_{23}+20x_{22}\\
\dot{x}_{23}=&x_{21}x_{22}-3x_{23}
\end{aligned}
\right.
\end{equation}
The next two drive systems are considered as
\begin{equation} \label{25}
\left\{
\begin{aligned}
\dot{y}_{11}=&y_{12}\\
\dot{y}_{12}=&y_{13}\\
\dot{y}_{13}=&-6y_{11}-2.92y_{12}-1.2y_{13}+y_{11}^2
\end{aligned}
\right.
\end{equation}

\begin{equation} \label{26}
\left\{
\begin{aligned}
\dot{y}_{21}=&36(y_{22}-y_{21})\\
\dot{y}_{22}=&-y_{21}y_{23}+20y_{22}\\
\dot{y}_{23}=&y_{21}y_{22}-3y_{23}
\end{aligned}
\right.
\end{equation}
The first two response system are described as
\begin{equation} \label{27}
\left\{
\begin{aligned}
\dot{z}_{11}=&z_{12}+u_{11}\\
\dot{z}_{12}=&z_{13}+u_{12}\\
\dot{z}_{13}=&-6z_{11}-2.92z_{12}-1.2z_{13}+z_{11}^2+u_{13}
\end{aligned}
\right.
\end{equation}

\begin{equation} \label{28}
\left\{
\begin{aligned}
\dot{z}_{21}=&36(z_{22}-z_{21})+u_{21}\\
\dot{z}_{22}=&-z_{21}z_{23}+20z_{22}+u_{22}\\
\dot{z}_{23}=&z_{21}z_{22}-3z_{23}+u_{23}
\end{aligned}
\right.
\end{equation}
and the next two response systems are taken as
\begin{equation} \label{29}
\left\{
\begin{aligned}
\dot{w}_{11}=&w_{12}+u_{31}\\
\dot{w}_{12}=&w_{13}+u_{32}\\
\dot{w}_{13}=&-6w_{11}-2.92w_{12}-1.2w_{13}+w_{11}^2+u_{33}
\end{aligned}
\right.
\end{equation}

\begin{equation} \label{30}
\left\{
\begin{aligned}
\dot{w}_{21}=&36(w_{22}-w_{21})+u_{41}\\
\dot{w}_{22}=&-w_{21}w_{23}+20w_{22}+u_{42}\\
\dot{w}_{23}=&w_{21}w_{22}-3w_{23}+u_{43}
\end{aligned}
\right.
\end{equation}
By the conditions on indices $i, j, l, m=1, 2, 3$ stated in Definition 2, several multi switching combination exist for defining the error $e=(e_1, e_2)^T$. We will present results for one randomly selected error space vector combination formed out of several possibilities. Let us define $e_1=(e_{11_{2131}}, e_{12_{1322}}, e_{13_{3213}})$, and $e_2=(e_{21_{3221}}, e_{22_{1332}}, e_{23_{2113}})$ where
\begin{align*}
e_{11_{2131}}=&a_{12}x_{12}+b_{11}y_{11}-c_{13}z_{13}-d_{11}w_{11}\\
e_{12_{1322}}=&a_{11}x_{11}+b_{13}y_{13}-c_{12}z_{12}-d_{12}w_{12}\\
e_{13_{3213}}=&a_{13}x_{13}+b_{12}y_{12}-c_{11}z_{11}-d_{13}w_{13}\\
e_{21_{3221}}=&a_{23}x_{23}+b_{22}y_{22}-c_{22}z_{22}-d_{21}w_{21}\\
e_{22_{1332}}=&a_{21}x_{21}+b_{23}y_{23}-c_{23}z_{23}-d_{22}w_{22}\\
e_{23_{2113}}=&a_{22}x_{22}+b_{21}y_{21}-c_{21}z_{21}-d_{23}w_{23}
\end{align*}
under the assumption that $A_1=diag(a_{11}, a_{12}, a_{13})$, $A_2=diag(a_{21}, a_{22}, a_{23})$, $B_1=diag(b_{11}, b_{12}, b_{13})$, $B_2=diag(b_{21}, b_{22}, b_{23})$, $C_1=diag(c_{11}, c_{12}, c_{13})$, $C_2=diag(c_{21}, c_{22}, c_{23})$, $D_1=diag(d_{11}, d_{12}, d_{13})$, and $D_2=diag(d_{21}, d_{22}, d_{23})$.
Assuming $A_1=A_2=B_1=B_2=C_1=C_2=D_1=D_2=I$, the controllers are chosen as
\begin{equation} \label{31}
\left\{
\begin{aligned}
U_{11}=&x_{13}+y_{12}-(-6z_{11}-2.92z_{12}-1.2z_{13}+z_{11}^2)-w_{12}+e_{11_{2131}}\\
U_{12}=&x_{12}+(-6y_{11}-2.92y_{12}-1.2y_{13}+y_{11}^2)-z_{13}-w_{13}+e_{12_{1322}}\\
U_{13}=&(-6x_{11}-2.92x_{12}-1.2x_{13}+x_{11}^2)+y_{13}-z_{12}\\
             &-(-6w_{11}-2.92w_{12}-1.2w_{13}+w_{11}^2)+e_{13_{3213}}\\
\end{aligned}
\right.
\end{equation} 

\begin{equation} \label{32}
\left\{
\begin{aligned}
U_{21}=&x_{21}x_{22}-3x_{23}+(-y_{21}y_{23}+20y_{22})-(-z_{21}z_{23}+20z_{22})-36(w_{22}-w_{21})+e_{21_{3221}}\\
U_{22}=&36(x_{22}-x_{21})+(y_{21}y_{22}-3y_{23})-(z_{21}z_{22}-3z_{23})-(-w_{21}w_{23}+20w_{22})+e_{22_{1332}}\\
U_{23}=&(-x_{21}x_{23}+20x_{22})+36(y_{22}-y_{21})-36(z_{22}-z_{21})-(w_{21}w_{22}-3w_{23})+e_{23_{2113}}
\end{aligned}
\right.
\end{equation} 
where $U_{11}=u_{13}+u_{31}$, $U_{12}=u_{12}+u_{32}$, $U_{13}=u_{11}+u_{33}$, $U_{21}=u_{22}+u_{41}$, $U_{22}=u_{23}+u_{42}$, and $U_{23}=u_{21}+u_{43}$.

These controllers (\ref{31}) and (\ref{32}) are designed in accordance with Theorem 1 in order to realise the desired synchronization. In the numerical simulations process the initial conditions of the drive and response systems are chosen as $(x_{11}, x_{12}, x_{13})=(2, -3, 1)$, $(x_{21}, x_{22}, x_{23})=(-2.5, 1, -3)$, $(y_{11}, y_{12}, y_{13})=(1, 0, -1)$, $(y_{21}, y_{22}, y_{23})=(-1.5, 2, 1.5)$, $(z_{11}, z_{12}, z_{13})=(4, -3.5, 3)$, $(z_{21}, z_{22}, z_{23})=(-0.5, 1.5, 0)$, $(w_{11}, w_{12}, w_{13})=(1, -1.5, -2)$, and $(w_{21}, w_{22}, w_{23})=(-1, 1.5, 3)$. Figures $(1)-(6)$ illustrates the time response of synchronized states and Figure $(7)$ displays time response of synchronization errors. We can see that the desired dual combination combination multi switching synchronization is achieved with the controllers we designed.  
\begin{figure}[htbp] %  figure placement: here, top, bottom, or page
   \centering
   \includegraphics[width=\linewidth,height=1.5 in]{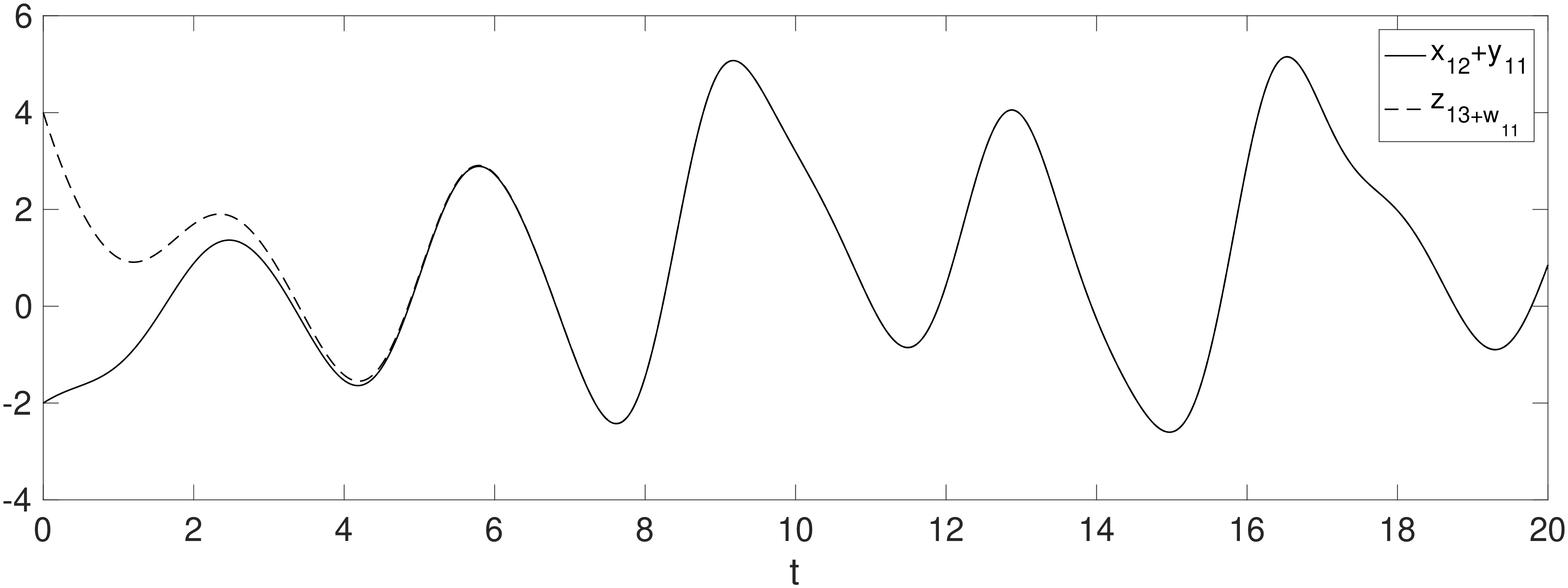} 
   \caption{Response for states $x_{12}+y_{11}$ and $z_{13}+w_{11}$ for drive systems (\ref{23}), (\ref{25}) and response systems (\ref{27}), (\ref{29}).}
  % \label{fig:example}
\end{figure}
\begin{figure}[tbp] %  figure placement: here, top, bottom, or page
   \centering
   \includegraphics[width=\linewidth,height=1.5in]{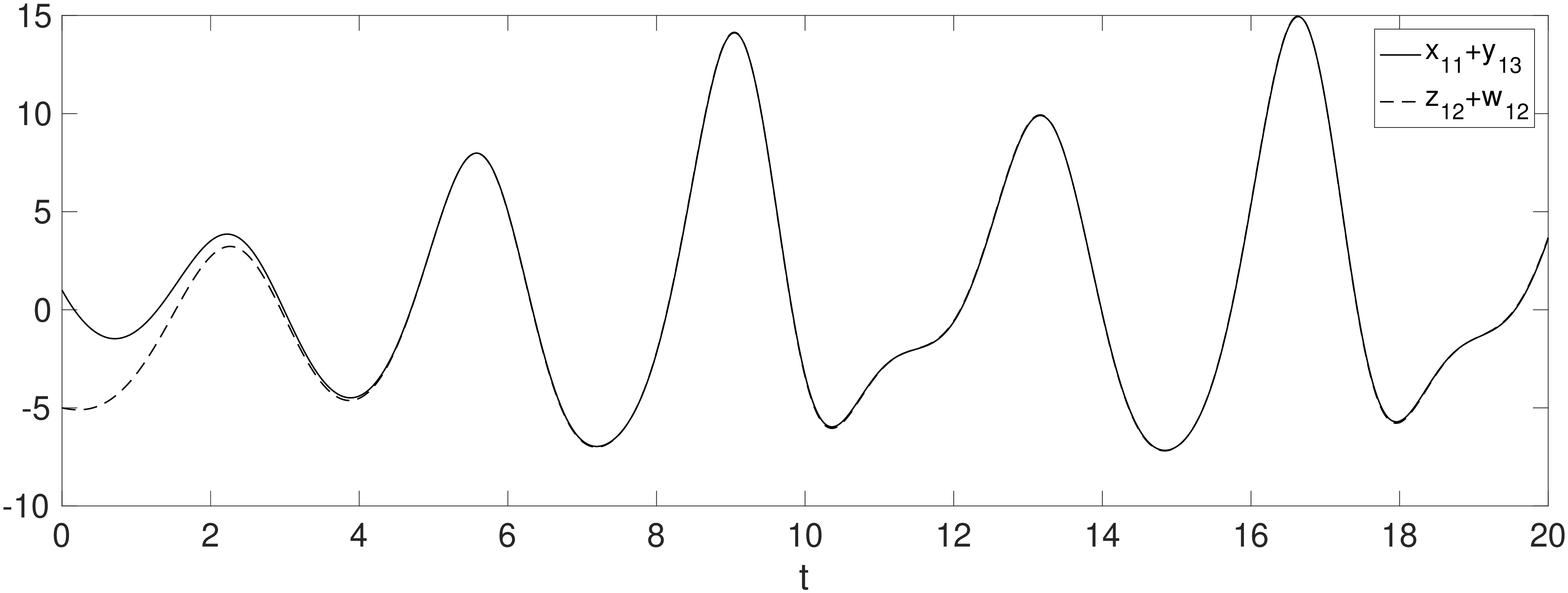} 
   \caption{Response for states $x_{11}+y_{13}$ and $z_{12}+w_{12}$ for drive systems (\ref{23}), (\ref{25}) and response systems (\ref{27}), (\ref{29}).}
   %\label{fig:example}
\end{figure}
\begin{figure}[htbp] %  figure placement: here, top, bottom, or page
   \centering
   \includegraphics[width=\linewidth,height=1.5in]{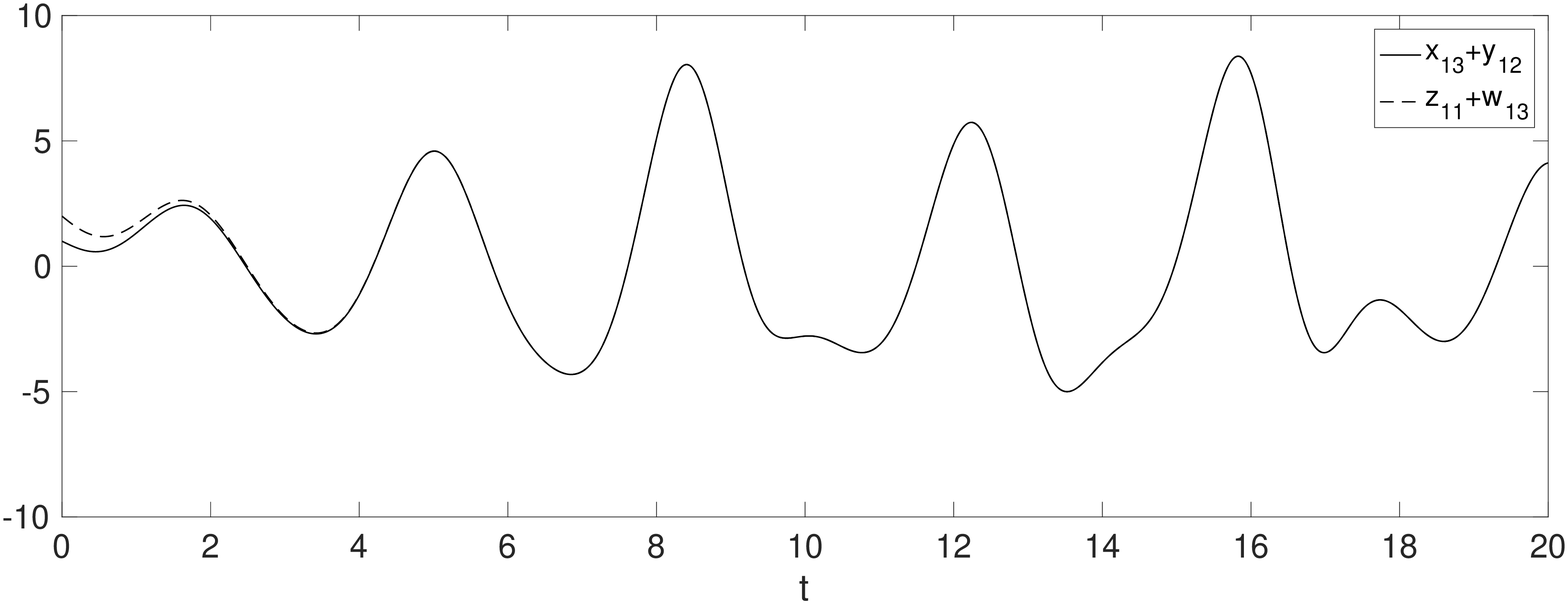} 
   \caption{Response for states $x_{13}+y_{12}$ and $z_{11}+w_{13}$ for drive systems (\ref{23}), (\ref{25}) and response systems (\ref{27}), (\ref{29}).}
  % \label{fig:example}
\end{figure}
\begin{figure}[htbp] %  figure placement: here, top, bottom, or page
   \centering
   \includegraphics[width=\linewidth,height=1.5in]{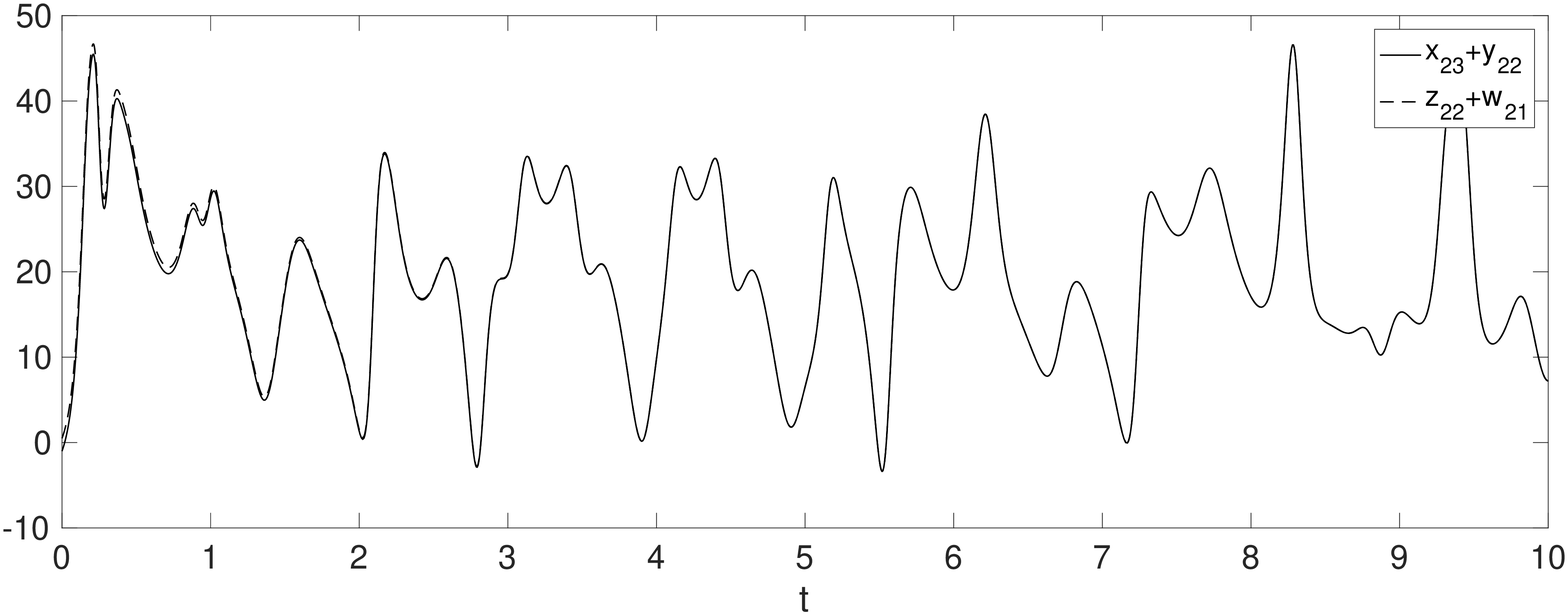} 
   \caption{Response for states $x_{23}+y_{22}$ and $z_{22}+w_{21}$ for drive systems (\ref{24}), (\ref{26}) and response systems (\ref{28}), (\ref{30}).}
  % \label{fig:example}
\end{figure}
\begin{figure}[htbp] %  figure placement: here, top, bottom, or page
   \centering
   \includegraphics[width=\linewidth,height=1.5in]{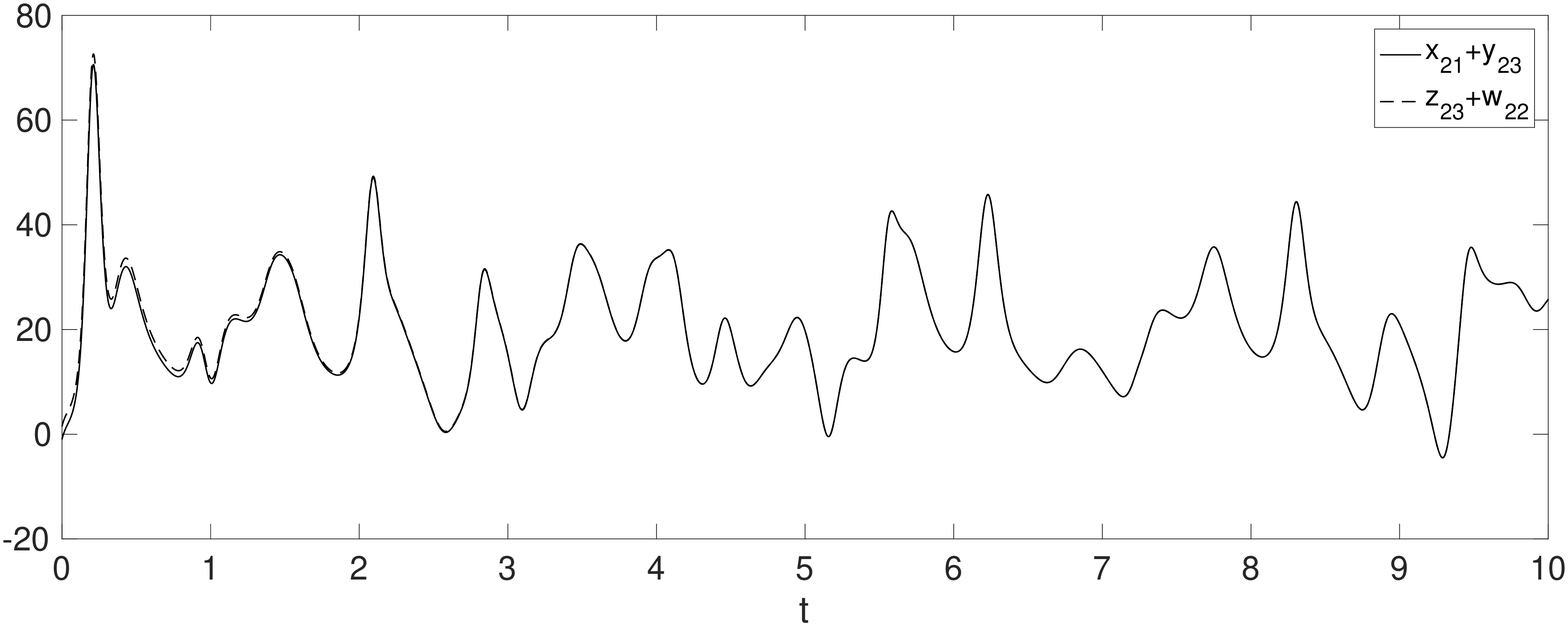} 
   \caption{Response for states $x_{21}+y_{23}$ and $z_{23}+w_{22}$ for drive systems (\ref{24}), (\ref{26}) and response systems (\ref{28}), (\ref{30}).}
   %\label{fig:example}
\end{figure}
\begin{figure}[htbp] %  figure placement: here, top, bottom, or page
   \centering
   \includegraphics[width=\linewidth,height=1.5in]{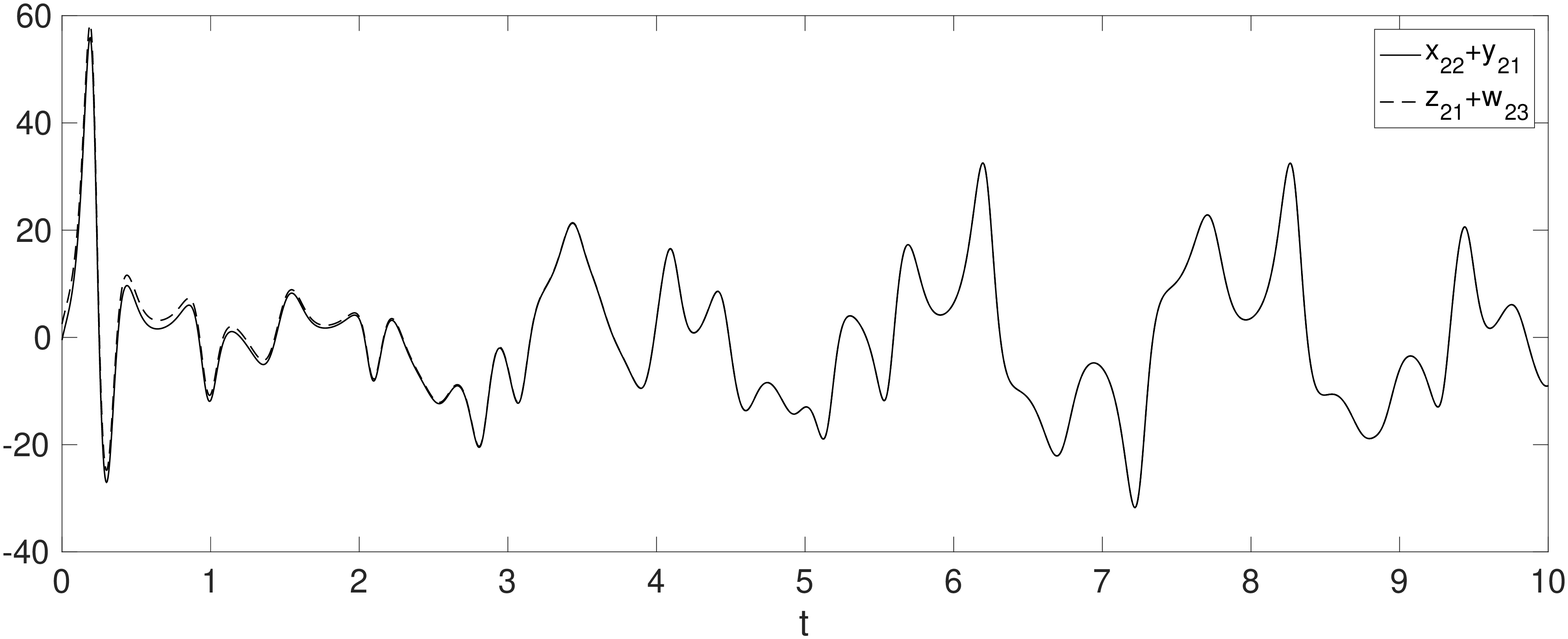} 
   \caption{Response for states $x_{22}+y_{21}$ and $z_{21}+w_{23}$ for drive systems (\ref{24}), (\ref{26}) and response systems (\ref{28}), (\ref{30}).}
   %\label{fig:example}
\end{figure}
\begin{figure}[htbp] %  figure placement: here, top, bottom, or page
   \centering
   \includegraphics[width=\linewidth,height=1.5in]{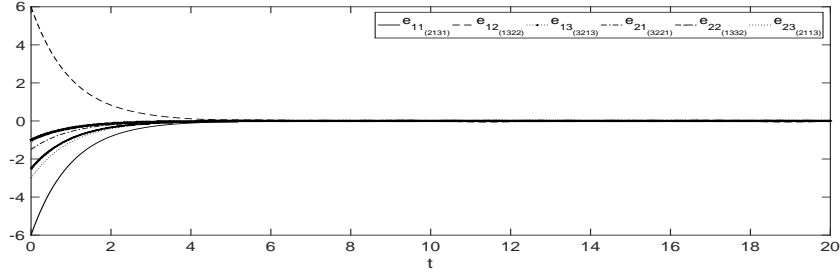} 
   \caption{Time response of synchronization errors}
   %\label{fig:example}
\end{figure}
%%%%%%%%%%%%%%%%%%%%%%%%%%%%%%%%%%%%%%%%%%%%%%%%%%%%%%%%%%%%%%%%%%%%%%%%%%%%%%%%%%%%%%%%%%%%%%%%%%%%%%%%%%%%%%%%%%%%%%%%%%%%%%%%%%%%%%%%%%%%%%%%%%%%%%%%%%%%%%%%%%%%%%%%%%%%%%
\section{Conclusions}
The main purpose of this paper is to propose a novel scheme for synchronization involving eight chaotic systems. The proposed scheme dual combination combination multi switching synchronization achieves synchronization between four chaotic drive systems and four chaotic response systems in a multi switching manner. The main advantages of this scheme can be summarised as

i) The complexity of signal achieved by multiple combination enhances the security of transmitted signal, as the dynamic behaviour of resultant signal is so complex that it becomes very difficult, for the intruder, to separate the information signal from the transmitted signal. Thus in the context of secure communication applications, this scheme may provide improved performance and better resistance.

ii) The concept of multi switching in this scheme further strengthens the anti attack ability of the transmitted signals from drive systems because, for an intruder, determining the correct combination for error space vector is extremely difficult due to large number of possible synchronization directions.

iii) Several other synchronization schemes such as multi switching combination combination  synchronization, dual combination multi switching synchronization, multi switching combination synchronization, dual multi switching projective synchronization, multi switching projective synchronization are obtained as special cases of dual combination combination multi switching synchronization scheme.

iv) The proposed scheme theoretically guarantees a good control performance. 

v) The significant outcome may form the basis of several future studies.

Using Lyapunov stability theory, sufficient conditions are obtained for achieving dual combintaion combination multi switching synchronization. Numerical simulations has been demonstrated using four Genesio Tesi systems and four Lu systems to show the effectiveness and validity of the method. Using fractional order drive and response systems, or utilising the scheme to implement in secure communication applications, and designing the controllers in the presence of uncertain factors and disturbances in the system are some interesting directions for future work.
%%%%%%%%%%%%%%%%%%%%%%%%%%%%%%%%%%%%%%%%%%%%%%%%%%%%%%%%%%%%%%%%%%%%%%%%%%%%%%%%%%%%%%%%%%%%%%%%%%%%%%%%%%%%%%%%%%%%
\section*{Acknowledgements}
The work of the third author is supported by the Senior Research Fellowship of Council of Scientific and Industrial Research, India(Grant no. 09/045(1319)/2014-EMR-I).

\section*{References}

\bibliography{DCCMS}

\begin{thebibliography}{10}
\expandafter\ifx\csname url\endcsname\relax
  \def\url#1{\texttt{#1}}\fi
\expandafter\ifx\csname urlprefix\endcsname\relax\def\urlprefix{URL }\fi
\expandafter\ifx\csname href\endcsname\relax
  \def\href#1#2{#2} \def\path#1{#1}\fi

\bibitem{1}
L.~M. Pecora, T.~L. Carroll, Synchronization in chaotic systems, Physical
  review letters 64~(8) (1990) 821--824.

\bibitem{2}
G.~Chen, X.~Dong, Methodologies, Perspectives and Applications, World
  Scientific, 1998.

\bibitem{3}
M.~Chen, D.~Zhou, Y.~Shang, A new observer-based synchronization scheme for
  private communication, Chaos, Solitons \& Fractals 24~(4) (2005) 1025--1030.

\bibitem{4}
X.~J. Wu, J.~Li, R.~K. Upadhyay, Chaos control and synchronization of a
  three-species food chain model via holling functional response, International
  Journal of Computer Mathematics 87~(1) (2010) 199--214.

\bibitem{5}
M.~Ma, J.~Zhou, J.~Cai, Practical synchronization of nonautonomous systems with
  uncertain parameter mismatch via a single state feedback control,
  International Journal of Modern Physics C 23~(11) (2012) 1250073.

\bibitem{6}
C.~Yao, Q.~Zhao, J.~Yu, Complete synchronization induced by disorder in coupled
  chaotic lattices, Physics Letters A 377~(5) (2013) 370--377.

\bibitem{7}
H.~L. Li, Z.~Wang, Y.~L. Jiang, L.~Zhang, Z.~Teng, Anti-synchronization and
  intermittent anti-synchronization of two identical delay hyperchaotic chua
  systems via linear control, Asian Journal of Control 19~(1) (2017) 202--214.

\bibitem{8}
S.~Agrawal, S.~Das, Function projective synchronization between four
  dimensional chaotic systems with uncertain parameters using modified adaptive
  control method, Journal of process control 24~(5) (2014) 517--530.

\bibitem{9}
I.~Ahmad, A.~B. Saaban, A.~B. Ibrahim, M.~Shahzad, M.~M. Al~Sawalha,
  Reduced-order synchronization of time-delay chaotic systems with known and
  unknown parameters, Optik-International Journal for Light and Electron Optics
  127~(13) (2016) 5506--5514.

\bibitem{10}
M.~Srivastava, S.~Ansari, S.~Agrawal, S.~Das, A.~Leung, Anti-synchronization
  between identical and non-identical fractional-order chaotic systems using
  active control method, Nonlinear Dynamics 76~(2) (2014) 905--914.

\bibitem{11}
S.~Bowong, P.~V. McClintock, Adaptive synchronization between chaotic dynamical
  systems of different order, Physics Letters A 358~(2) (2006) 134--141.

\bibitem{12}
S.~Y. Li, C.~H. Yang, C.~T. Lin, L.-W. Ko, T.~T. Chiu, Adaptive synchronization
  of chaotic systems with unknown parameters via new backstepping strategy,
  Nonlinear Dynamics 70~(3) (2012) 2129--2143.

\bibitem{13}
C.~Li, K.~Su, L.~Wu, Adaptive sliding mode control for synchronization of a
  fractional-order chaotic system, Journal of Computational and Nonlinear
  Dynamics 8~(3) (2013) 031005.

\bibitem{14}
Y.~Liu, P.~Davis, Dual synchronization of chaos, Physical Review E 61~(3)
  (2000) R2176.

\bibitem{15}
A.~A. Othman, M.~S.~M. Noorani, M.~M. Al-Sawalha, Adaptive dual synchronization
  of chaotic and hyperchaotic systems with fully uncertain parameters,
  Optik-International Journal for Light and Electron Optics 127~(19) (2016)
  7852--7864.

\bibitem{16}
A.~A. Othman, M.~Noorani, M.~M. Al-sawalha, Adaptive dual anti-synchronization
  of chaotic systems with fully uncertain parameters, Optik-International
  Journal for Light and Electron Optics 127~(22) (2016) 10478--10489.

\bibitem{17}
V.~K. Yadav, N.~Srikanth, S.~Das, Dual function projective synchronization of
  fractional order complex chaotic systems, Optik-International Journal for
  Light and Electron Optics 127~(22) (2016) 10527--10538.

\bibitem{18}
J.~Sun, S.~Jiang, G.~Cui, Y.~Wang, Dual combination synchronization of six
  chaotic systems, Journal of Computational and Nonlinear Dynamics 11~(3)
  (2016) 034501.

\bibitem{19}
A.~K. Singh, V.~K. Yadav, S.~Das, Dual combination synchronization of the
  fractional order complex chaotic systems, Journal of Computational and
  Nonlinear Dynamics 12~(1) (2017) 011017.

\bibitem{20}
L.~Runzi, W.~Yinglan, D.~Shucheng, Combination synchronization of three classic
  chaotic systems using active backstepping design, Chaos: An Interdisciplinary
  Journal of Nonlinear Science 21~(4) (2011) 043114.

\bibitem{21}
Z.~Wu, X.~Fu, Combination synchronization of three different order nonlinear
  systems using active backstepping design, Nonlinear Dynamics 73~(3) (2013)
  1863--1872.

\bibitem{22}
J.~Sun, Y.~Shen, G.~Zhang, C.~Xu, G.~Cui, Combination--combination
  synchronization among four identical or different chaotic systems, Nonlinear
  Dynamics 73~(3) (2013) 1211--1222.

\bibitem{23}
J.~Sun, Y.~Wang, G.~Cui, Y.~Shen, Dynamical properties and
  combination--combination complex synchronization of four novel chaotic
  complex systems, Optik-International Journal for Light and Electron Optics
  127~(4) (2016) 1572--1580.

\bibitem{24}
J.~Sun, Y.~Shen, Q.~Yin, C.~Xu, Compound synchronization of four memristor
  chaotic oscillator systems and secure communication, Chaos: An
  Interdisciplinary Journal of Nonlinear Science 23~(1) (2013) 013140.

\bibitem{25}
C.~Yang, H.~Cai, P.~Zhou, Compound generalized function projective
  synchronization for fractional-order chaotic systems, Discrete Dynamics in
  Nature and Society 2016.

\bibitem{26}
B.~Zhang, F.~Deng, Double-compound synchronization of six memristor-based
  lorenz systems, Nonlinear Dynamics 77~(4) (2014) 1519--1530.

\bibitem{27}
J.~Sun, Y.~Wang, Y.~Wang, G.~Cui, Y.~Shen, Compound-combination synchronization
  of five chaotic systems via nonlinear control, Optik-International Journal
  for Light and Electron Optics 127~(8) (2016) 4136--4143.

\bibitem{28}
J.~Sun, Y.~Shen, Compound--combination anti-synchronization of five simplest
  memristor chaotic systems, Optik-International Journal for Light and Electron
  Optics 127~(20) (2016) 9192--9200.

\bibitem{29}
A.~Ucar, K.~E. Lonngren, E.-W. Bai, Multi-switching synchronization of chaotic
  systems with active controllers, Chaos, Solitons \& Fractals 38~(1) (2008)
  254--262.

\bibitem{30}
A.~Radwan, K.~Moaddy, K.~N. Salama, S.~Momani, I.~Hashim, Control and switching
  synchronization of fractional order chaotic systems using active control
  technique, Journal of advanced research 5~(1) (2014) 125--132.

\bibitem{31}
X.~Zhou, L.~Xiong, X.~Cai, Adaptive switched generalized function projective
  synchronization between two hyperchaotic systems with unknown parameters,
  Entropy 16~(1) (2013) 377--388.

\bibitem{32}
S.~Zheng, Multi-switching combination synchronization of three different
  chaotic systems via nonlinear control, Optik-International Journal for Light
  and Electron Optics 127~(21) (2016) 10247--10258.

\bibitem{33}
U.~E. Vincent, A.~Saseyi, P.~V. McClintock, Multi-switching combination
  synchronization of chaotic systems, Nonlinear Dynamics 80~(1-2) (2015)
  845--854.

\end{thebibliography}
\bibliographystyle{elsarticle-num}

\end{document}